\theoremstyle{plain}
\newtheorem{thm}{Theorem}[section]
\newtheorem{prop}[thm]{Proposition}
\newtheorem{lemma}[thm]{Lemma}
\newtheorem{cor}[thm]{Corollary}
\theoremstyle{definition}
\newtheorem{alg}{Algorithm}
\theoremstyle{remark}
\newtheorem{remark}[thm]{Remark}
\begin{document}
\title{Simulating Events of Unknown Probabilities via Reverse Time Martingales}

\author{Krzysztof \L atuszy\'nski \\  \textit{Department of Statistics} \\ \textit{University of Warwick} \\ \small{Coventry, CV4 7AL} \\ \small{latuch@gmail.com} \and Ioannis Kosmidis \\ \textit{Department of Statistics} \\ \textit{University of Warwick} \\  \small{Coventry, CV4 7AL} \\ \small{I.Kosmidis@warwick.ac.uk} \and Omiros Papaspiliopoulos \\ \textit{Department of Economics} \\ \textit{Universitat Pompeu Fabra} \\ \small{Ramon Trias Fargas 25-27} \\ \small{Barcelona 08005, Spain} \\ \small{omiros.papaspiliopoulos@upf.edu}  \and Gareth O. Roberts \\ \textit{University of Warwick} \\ \textit{Department of Statistics} \\ \small{Coventry, CV4 7AL} \\ \small{gareth.o.roberts@warwick.ac.uk}}
\maketitle
\begin{abstract}
Let $s\in
(0,1)$ be uniquely determined but only its approximations
can be obtained with a finite computational effort. Assume one aims to simulate an event of probability $s.$ Such settings are
often encountered in statistical simulations. We consider two specific
examples. First, the exact simulation of non-linear diffusions
(\cite{BeskosRobertsEA1}). Second, 
the celebrated Bernoulli factory problem (\cite{KeaneOBrien},
\cite{NacuPeres}) of generating an $f(p)-$coin given a sequence $X_1,
X_2,...$ of 
independent tosses of a $p-$coin (with known $f$ and unknown $p$). We
describe a general framework and provide algorithms where this kind of
problems can be fitted and solved. The algorithms are straightforward
to implement and thus allow for effective simulation of desired events
of probability $s.$ 
%In the case of diffusions, we obtain the algorithm
%of \cite{BeskosRobertsEA1} as a specific instance of the generic
%framework developed here. In the case of the Bernoulli factory, we present a martingale/unbiased estimator formulation of the problem that results in statistical understanding of the Nacu-Peres algorithm for
%$f(p) = \min\{2p, 1-2\varepsilon\}$ (which is central to the general
%question, c.f. \cite{NacuPeres}) and allows for its immediate
%implementation that avoids algorithmic difficulties of the original
%version. 
Our methodology  links the simulation problem to existence and
construction of unbiased estimators. 

\end{abstract}
\section{Introduction} \label{sec_intro}
Assume that one aims to simulate an event of unknown probability $s\in
(0,1)$ which is uniquely determined, however only its approximations
can be obtained using a finite computational effort. Such settings are often encountered in
statistical simulations and emerge if e.g. $s$ is given by a series expansion or a consistent estimator for $s$ is available (see e.g. \cite{Devroye},
\cite{BeskPapaRobFearn}, \cite{BeskosPapaRobertsEA3}, \cite{BeskosRobertsEA1}
\cite{KeaneOBrien}, \cite{NacuPeres}). A celebrated example of this
kind is the Bernoulli factory problem which motivated our work. It can
be stated as follows. Let $p \in \mathcal{P} \subseteq [0, 1]$ be
unknown and let $f : \mathcal{P} \to [0, 1].$ Then the problem is to generate $Y,$ a
single coin toss of an $s=f(p)-$coin, given a sequence $X_1, X_2,...$
of independent tosses of a $p-$coin. For the historical context of this
question and 
a range of theoretical results see \cite{PeresAnnStat}, \cite{KeaneOBrien}, \cite{NacuPeres}, \cite{MosselPeres} and \cite{HoltzNazarovPeres}.
In particular \cite{KeaneOBrien} provide necessary and sufficient conditions
for $f,$ under which an algorithm generating an $f(p)-$coin
exists. Nacu and Peres in \cite{NacuPeres} suggest a constructive
algorithm for simulating $f(p) = \min\{2p, 1- 2\varepsilon\}$ which is
central to solving the problem 
for general $f$ and allows for generating an $f(p)-$coin for a large
class of functions (e.g. real analytic, see \cite{NacuPeres} and
Section \ref{sec_NacuPeres} for
details).
The algorithm is based on polynomial envelopes of $f.$ To
run the algorithm one has to construct sets of $\{0,1\}$ strings of
appropriate cardinality based on coefficients of the polynomial
envelopes. Unfortunately its naive implementation requires dealing
with sets of exponential size (we encountered e.g. $2^{2^{26}}$) and thus is not very
practical. Hence the authors provide a simple approximate algorithm
for generating $\min\{2p,1\}-$coins.
 Ongoing research in Markov chain Monte Carlo and
rejection sampling indicates that the Bernoulli factory problem is not
only of theoretical interest (c.f. \cite{AssGlynnThor},
\cite{HendeGlynn},  Chapter~16 of \cite{AssGlynnBook}, and Section~\ref{sec:exact} of the present paper). However using
approximate algorithms in these applications perturbs simulations in a
way difficult to quantify.

In Section 2   we develop a framework for simulating events of
unknown probabilities. Our approach is based on random sequences, say
$L_n$ and $U_n$ under- and overestimating $s$ that are monotone in
expectations (i.e. $\mathbb{E}\;L_n \nearrow s$ and $\mathbb{E}\;U_n
\searrow s$) and are reverse time super- and submartingales
respectively. From $L_n$ and $U_n$ we construct $\tilde{L}_n$ and
$\tilde{U}_n$ that are monotone almost surely and have the same
expectations (c.f. Theorem \ref{thm_alg_4_valid}, Algorithm 4). Given
$\tilde{L}_n$ and $\tilde{U}_n$ we sample events of probability $s$
using a single $U(0,1)$ random variable.  This result generalizes
classical constructions for simulation of events of unknown
probabilities using  deterministic sequences (\cite{Devroye}). We link these results to existence and construction of unbiased estimators. In particular one can use the algorithms of Section \ref{sec_theory} to obtain unbiased sequential estimators of a parameter of interest that is not necessarily in $[0,1].$

We illustrate our results with examples.  First, in Section
\ref{sec_NacuPeres}, we present a reverse time martingale/unbiased
estimator formulation of the Nacu-Peres algorithm which we believe
gives a new perspective on the Bernoulli factory problem. We identify
the coefficients of the lower and
 upper polynomial envelopes as random variables of desired properties
 and implement the algorithm using a single $U(0,1)$ auxiliary random
 variable. We do not need to identify subsets of $\{0,1\}$ strings and
 thus avoid algorithmic difficulties of the original version. The martingale approach also simplifies the proof of validity of the Nacu-Peres algorithm. In the special case when $f$ has an alternating series expansion with decreasing coefficients, the martingale approach results in new, very efficient algorithms. Second, in Section
\ref{sec:exact} we obtain the Exact 
Algorithm for diffusions introduced in \cite{BeskosRobertsEA1} as an application of the generic
Algorithm 3 of Section \ref{sec_theory}. 

\section{Simulation of Events with Unknown Probabilities}
\label{sec_theory}
Throughout the paper we assume that we can generate uniformly distributed
iid random variables $G_0, G_1,... \sim U(0, 1)$ which will serve as a source of randomness for algorithms. Thus to simulate an $s-$coin $C_s$ we just let $C_s := \mathbb{I}\{G_0 \leq s\}.$ We will be concerned with settings where $s$ is not known explicitly.

The following simple observation will turn out very useful. 
\begin{lemma} \label{lemma_alg_1_valid}
Sampling events of probability $s \in [0,1]$ is equivalent to constructing an unbiased estimator of $s$ taking values in $[0,1]$ with probability~1.    
\end{lemma}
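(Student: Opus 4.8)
The plan is to prove the two implications separately by exhibiting explicit reductions, since ``equivalent'' here means that each task can be carried out given a solution to the other, using only the standing source of randomness $G_0, G_1, \dots \sim U(0,1)$.

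For the trivial direction, suppose we can sample an event of probability $s$, i.e. produce $C_s \in \{0,1\}$ with $\mathbb{E}\,C_s = \mathbb{P}(C_s = 1) = s$. Then $C_s$ is itself an unbiased estimator of $s$, and it takes values in $\{0,1\} \subseteq [0,1]$ with probability $1$, so nothing further is required. For the converse, suppose $\hat{s}$ is an unbiased estimator of $s$ with $\hat{s} \in [0,1]$ almost surely, constructed from $G_1, G_2, \dots$. I would draw a fresh $G_0 \sim U(0,1)$, independent of $\hat{s}$, and output $C := \mathbb{I}\{G_0 \le \hat{s}\}$. Conditioning on $\hat{s}$ and using that $G_0$ is uniform on $[0,1]$ while $\hat{s}\in[0,1]$ gives $\mathbb{P}(C = 1 \mid \hat{s}) = \hat{s}$; the tower property and unbiasedness then yield $\mathbb{P}(C=1) = \mathbb{E}\,\hat{s} = s$, so $C$ is an $s$-coin.

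There is no real obstacle in this lemma; the only points that need care are exactly the two hypotheses in the statement. First, $\hat{s}$ must lie in $[0,1]$, as otherwise $\mathbb{I}\{G_0 \le \hat{s}\}$ would not realise $\hat{s}$ as a conditional probability — this is precisely why the range restriction is imposed. Second, $G_0$ must be drawn independently of everything entering $\hat{s}$, which is guaranteed by the assumption that $G_0, G_1, \dots$ are iid. The substance of the lemma, and the reason it is ``useful'' for the rest of the paper, is that it recasts the problem of simulating an $s$-coin as the problem of constructing a $[0,1]$-valued unbiased estimator of $s$, which is the form attacked in Section~\ref{sec_theory}.
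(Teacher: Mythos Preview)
Your proof is correct and follows exactly the same approach as the paper: for one direction the $s$-coin itself serves as the estimator, and for the other you set $C_s := \mathbb{I}\{G_0 \le \hat{s}\}$ with an independent uniform $G_0$ and conclude via the tower property. The paper's argument is identical in content and only slightly terser.
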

\begin{proof}
Let $\hat{S},$ s.t. $\mathbb{E} \hat{S} = s$ and $\mathbb{P}(\hat{S} \in [0,1]) = 1$ be the estimator. Then draw $G_0 \sim U(0,1),$ obtain $\hat{S}$ and define a coin $C_s := \mathbb{I}\{G_0 \leq \hat{S}\}.$ Clearly $$\mathbb{P}(C_s = 1) = \mathbb{E}\;\mathbb{I}(G_0 \leq \hat{S}) =  \mathbb{E}\left(\mathbb{E} \left( \mathbb{I}(G_0 \leq \hat{s})\;|\; \hat{S} = \hat{s}\right)\right) = \mathbb{E}\hat{S} = s.$$
The converse is straightforward since an $s-$coin is an unbiased estimator of $s$ with values in $[0,1].$
\end{proof}
Thus given $\hat{S} \in [0,1],$ an unbiased estimator of $s,$ we can sample events of probability $s$ by the following algorithm.  
\begin{alg} \label{alg_1} ~\\
1. simulate $G_0 \sim U(0,1);$\\
2. obtain $\hat{S};$\\
3. if $G_0 \leq \hat{S}$ set $C_s:=1,$ otherwise set $C_s:=0;$\\
4. output $C_s.$
\end{alg}
Next assume that $l_1, l_2,...$ and $u_1, u_2,...$ are sequences of lower and upper bounds for $s$ converging to $s.$ This setting is well known (\cite{Devroye}) and appears in a variety of situations, usually as an element of more complex simulation procedures, see e.g. \cite{BeskosPapaRobertsEA3}, \cite{PapaRobe}. Here we use the following algorithm for simulating an $s-$coin.
\begin{alg} \label{alg_2} ~\\
1. simulate  $G_0 \sim U(0,1);$ set $n = 1;$\\
2. compute $l_n$ and $u_n;$\\
3. if $G_0 \leq l_n$ set $C_s := 1;$\\
4. if $G_0 > u_n$ set $C_s := 0;$\\
5. if $l_n < G_0 \leq u_n$ set $n := n + 1$ and GOTO 2;\\
6. output $C_s.$
\end{alg}
The algorithm stops with probability $1$ since $l_n$ and $u_n$ converge to $s$ from below and from above. Precisely, the algorithm needs $N > n$ iterations to stop with probability $\inf_{k\leq n} u_k - \sup_{k\leq n} l_k.$ Because we can always obtain monotone bounds by setting $u_n := \inf_{k\leq n} u_k$ and $l_n := \sup_{k\leq n} l_k,$ we assume that $l_n$ is an increasing
sequence and $u_n$ is a decreasing sequence.

The next step is to combine the above ideas and work with randomized bounds, i.e. in a setting where we have estimators $L_n$ and $U_n$ of the upper and lower bounds $l_n$ and $u_n.$ The estimators shall live
on the same probability space and have the following properties that hold a.s. for every $n = 1,2, ...$
\begin{eqnarray}
& & L_n \leq U_n  \label{assu_L_U_ineq} \\
& & L_n \in [0, 1] \qquad \textrm{and}\qquad U_n \in [0, 1] \label{assu_L_U_in_01} \\
& & L_{n-1} \leq L_n \quad \; \; \,\textrm{and} \qquad U_{n-1} \geq U_n \qquad \qquad \label{assu_L_U_monotone}
\end{eqnarray}
Note that we do not assume that $L_n \leq s$ or $U_n \geq s.$ Also condition
(\ref{assu_L_U_monotone}) implies monotonicity of expectations, i.e.
\begin{eqnarray} \mathbb{E}\;L_n = l_n \nearrow s &\textrm{and}& \mathbb{E}\;U_n = u_n \searrow s. \label{assu_L_U_Expect} \end{eqnarray}
Let $$\mathcal{F}_0 = \{\emptyset, \Omega\}, \qquad \mathcal{F}_n= \sigma\{L_n,U_n\}, \qquad \mathcal{F}_{k,n} = \sigma\{\mathcal{F}_k, \mathcal{F}_{k+1},... \mathcal{F}_n\} \quad \textrm{for $k\leq n.$} $$ Consider the following algorithm.
\begin{alg} \label{alg_3} ~\\
1. simulate  $G_0 \sim U(0,1);$ set $n = 1;$\\
2. obtain $L_n$ and $U_n$ given $\mathcal{F}_{0, n-1},$\\
3. if $G_0 \leq L_n$ set $C_s := 1;$\\
4. if $G_0 > U_n$ set $C_s := 0;$\\
5. if $L_n < G_0 \leq U_n$ set $n := n + 1$ and GOTO 2;\\
6. output $C_s.$
\end{alg}
\begin{lemma} \label{lemma_alg_3_valid}
Assume (\ref{assu_L_U_ineq}), (\ref{assu_L_U_in_01}), (\ref{assu_L_U_monotone}) and (\ref{assu_L_U_Expect}). Then Algorithm \ref{alg_3} outputs
a valid $s-$coin. Moreover the probability that it needs $N>n$ iterations equals $u_n-l_n.$ \end{lemma}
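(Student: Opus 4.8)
The plan is to analyze the algorithm through its exit time $N := \inf\{n\ge 1 : G_0 \le L_n \text{ or } G_0 > U_n\}$, on the event $\{N<\infty\}$ of which the output equals $C_s = \mathbb{I}\{G_0 \le L_N\}$. Throughout one uses that $G_0$ is drawn independently of the randomness producing the sequences $(L_n,U_n)_{n\ge 1}$ (in Algorithm~\ref{alg_3} the bounds are obtained from $G_1,G_2,\dots$, and in the examples from an independent stream), so $G_0$ is independent of $\mathcal{F}_{0,\infty} := \sigma(\mathcal{F}_n : n\ge 1)$; I would state this independence explicitly at the outset.

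First I would pin down the number of iterations. The algorithm performs more than $n$ steps exactly when it failed to stop at each of steps $1,\dots,n$, i.e. on $\{N>n\} = \bigcap_{k=1}^{n}\{L_k < G_0 \le U_k\}$. By the almost sure monotonicity (\ref{assu_L_U_monotone}), $L_k\le L_n$ and $U_k\ge U_n$ for $k\le n$, so this intersection collapses a.s.\ to the single most restrictive event, $\{N>n\} = \{L_n < G_0 \le U_n\}$. Conditioning on $\mathcal{F}_n$ and using independence of $G_0$ together with $0\le L_n\le U_n\le 1$ (from (\ref{assu_L_U_ineq})--(\ref{assu_L_U_in_01})) gives $\mathbb{P}(N>n\mid\mathcal{F}_n) = U_n-L_n$; taking expectations and invoking (\ref{assu_L_U_Expect}) yields $\mathbb{P}(N>n) = u_n-l_n$. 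Since $u_n\searrow s$ and $l_n\nearrow s$ this tends to $0$, so $N<\infty$ a.s.\ and $C_s$ is a well-defined $\{0,1\}$-valued random variable.

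Then I would compute $\mathbb{P}(C_s=1)$. Put $L_0:=0$ (consistent with $\mathcal{F}_0=\{\emptyset,\Omega\}$). I claim $\{C_s=1\} = \bigsqcup_{n\ge1}\{L_{n-1} < G_0 \le L_n\}$, the union being disjoint since $(L_n)$ is a.s.\ nondecreasing. For ``$\supseteq$'': on $\{L_{n-1}<G_0\le L_n\}$ one has $L_k\le L_{n-1}<G_0$ for $k<n$ and $G_0\le L_n\le U_n\le U_k$ for $k\le n$, so the algorithm neither triggers the lower rule before step $n$ nor the upper rule at or before step $n$; it therefore reaches step $n$ and, as $G_0\le L_n$, outputs $1$. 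For ``$\subseteq$'': if $C_s=1$ then the algorithm stopped at a finite $N$ with $G_0\le L_N$, and since it reached step $N$ it did not stop at step $N-1$, so $G_0>L_{N-1}$, i.e.\ $G_0\in(L_{N-1},L_N]$. Conditioning each summand on $\mathcal{F}_{0,n}$ (which determines $L_{n-1}$ and $L_n$) and using independence and $0\le L_{n-1}\le L_n\le1$ gives $\mathbb{P}(L_{n-1}<G_0\le L_n) = l_n-l_{n-1}$, with $l_0:=\mathbb{E} L_0=0$. The telescoping series sums to $\mathbb{P}(C_s=1) = \lim_n l_n = s$, and since $N<\infty$ a.s., $\mathbb{P}(C_s=0)=1-s$; hence $C_s$ is a valid $s$-coin.

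The only genuinely probabilistic ingredient is the independence of $G_0$ from $\mathcal{F}_{0,\infty}$; no martingale property is used here (the reverse-martingale structure enters only later, when one constructs such $(L_n,U_n)$ from given sequences). What requires care is purely the bookkeeping that converts the dynamic description of the algorithm into the two static identities $\{N>n\}=\{L_n<G_0\le U_n\}$ and $\{C_s=1\}=\bigsqcup_n\{L_{n-1}<G_0\le L_n\}$, keeping the a.s.\ qualifier attached to everything that relies on (\ref{assu_L_U_ineq}) and (\ref{assu_L_U_monotone}). A slightly slicker route to the coin identity is: $L_n\uparrow L_\infty$ and $U_n\downarrow U_\infty$ a.s.\ with $\mathbb{E}(U_\infty-L_\infty)=\lim(u_n-l_n)=0$, so $L_\infty=U_\infty=:S_\infty$ a.s.\ with $\mathbb{E} S_\infty=s$, and $\{C_s=1\}=\{G_0<S_\infty\}$ modulo the null event $\{G_0=S_\infty\}$, whence $\mathbb{P}(C_s=1)=\mathbb{E} S_\infty=s$ --- which also makes transparent the link with Lemma~\ref{lemma_alg_1_valid}.
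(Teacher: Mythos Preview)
Your proof is correct. Interestingly, the ``slicker route'' you sketch in your closing paragraph is precisely the argument the paper gives: it constructs the a.s.\ common limit $\hat{S}$ of $L_n$ and $U_n$, observes that $\hat{S}\in[0,1]$ with $\mathbb{E}\,\hat{S}=s$ by squeezing, and concludes that conditionally on the realization the algorithm outputs an $\hat{S}(\omega)$-coin, so that unconditionally it is an $s$-coin via Lemma~\ref{lemma_alg_1_valid}. Your main argument takes a different, more bookkeeping-flavoured route: you avoid constructing the limit altogether and instead decompose $\{C_s=1\}$ into the disjoint annuli $\{L_{n-1}<G_0\le L_n\}$, each contributing $l_n-l_{n-1}$, and telescope. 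This buys you a proof that never invokes any limit object or the auxiliary Lemma~\ref{lemma_alg_1_valid}, at the cost of the two-sided inclusion check for the decomposition. The paper's version is shorter and, as you yourself note, makes the link with the unbiased-estimator viewpoint (the Remark following the lemma) more transparent; your telescoping version is self-contained and arguably more elementary. Both compute $\mathbb{P}(N>n)=u_n-l_n$ in the same way.
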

\begin{proof} Probability that Algorithm \ref{alg_3} needs more then $n$ iterations equals $\mathbb{E}(U_n - L_n) = l_n - u_n \to 0$ as $n \to \infty.$ And since $0 \leq U_n - L_n$ is a decreasing sequence a.s., we also have $U_n - L_n \to 0$ a.s. So there exists a random variable $\hat{S},$ such that for
almost every realization of sequences $\{L_n(\omega)\}_{n\geq 1}$ and $\{U_n(\omega)\}_{n\geq 1}$ we have $L_n(\omega) \nearrow \hat{S}(\omega)$
and $U_n(\omega) \searrow \hat{S}(\omega).$ By (\ref{assu_L_U_in_01}) we have $\hat{S} \in [0,1]$ a.s. Thus for a fixed $\omega$ the algorithm outputs an
$\hat{S}(\omega)-$coin a.s. Clearly $\mathbb{E}\; L_n \leq \mathbb{E} \; \hat{S} \leq \mathbb{E} \; U_n$ and hence $\mathbb{E}\;\hat{S} = s.$
\end{proof}
\begin{remark}
The random variable $\hat{S}$ constructed in the proof can be viewed as the unbiased estimator of $s$ mentioned earlier with sequences $L_n$ and $U_n$ being its lower and upper random approximations.
\end{remark} 
\begin{remark}
For Algorithm \ref{alg_3} assumption (\ref{assu_L_U_in_01}) can be relaxed to \begin{eqnarray}
L_n \in (-\infty, 1] &\textrm{and}& U_n \in [0, \infty) \quad \textrm{a.s.} \quad \textrm{for every} \quad n = 1, 2,... \qquad\label{assu_L_U_bounded_one_side}
\end{eqnarray}
\end{remark}
The final step is to weaken condition (\ref{assu_L_U_monotone}) and let $L_n$ be a reverse time supermartingale and $U_n$ a reverse time submartingale with respect to $\mathcal{F}_{n, \infty}.$ Precisely, assume that for every $n = 1, 2, ...$ we have
\begin{eqnarray} \label{assu_L_super_MG}
\mathbb{E}\;(L_{n-1}\;|\; \mathcal{F}_{n,\infty}) \; = \; \mathbb{E}\;(L_{n-1}\;|\; \mathcal{F}_{n}) \; \leq \; L_n \; \textrm{ a.s.} &\textrm{and}& \\ \label{assu_U_sub_MG} \mathbb{E}\;(U_{n-1}\;|\; \mathcal{F}_{n,\infty}) \; = \; \mathbb{E}\;(U_{n-1}\;|\; \mathcal{F}_{n}) \; \geq \; U_n \; \textrm{ a.s.} 
\end{eqnarray}

Consider the following algorithm, that uses auxiliary random sequences $\tilde{L}_n$ and $\tilde{U}_n$ constructed online.
\begin{alg} \label{alg_4} ~\\
1. simulate  $G_0 \sim U(0,1);$ set $n = 1;$ set $L_0 \equiv \tilde{L}_0 \equiv 0$ and  $U_0 \equiv \tilde{U}_0 \equiv 1$ \\
2. obtain $L_n$ and $U_n$ given $\mathcal{F}_{0, n-1},$\\
3. compute $L_n^* = \mathbb{E}\; (L_{n-1}\; |\; \mathcal{F}_n)$ and $U_n^* = \mathbb{E}\; (U_{n-1}\; |\; \mathcal{F}_n).$\\
4. compute \begin{eqnarray}
\tilde{L}_n & = &\tilde{L}_{n-1} + \frac{L_n - L_n^*}{U_n^* - L_n^*} \left(\tilde{U}_{n-1} - \tilde{L}_{n-1} \right)\label{formula_tilde_L} \\ \tilde{U}_n & = & \tilde{U}_{n-1} - \frac{U_n^* - U_n}{U_n^* - L_n^*} \left(\tilde{U}_{n-1} - \tilde{L}_{n-1} \right) \label{formula_tilde_U}
\end{eqnarray}
5. if $G_0 \leq \tilde{L}_n$ set $C_s := 1;$\\
6. if $G_0 > \tilde{U}_n$ set $C_s := 0;$\\
7. if $\tilde{L}_n < G_0 \leq \tilde{U}_n$ set $n := n + 1$ and GOTO 2;\\
8. output $C_s.$
\end{alg}
\begin{thm} \label{thm_alg_4_valid}
Assume (\ref{assu_L_U_ineq}), (\ref{assu_L_U_in_01}), (\ref{assu_L_U_Expect}), (\ref{assu_L_super_MG}) and (\ref{assu_U_sub_MG}). Then Algorithm \ref{alg_4} outputs a valid $s-$coin. Moreover the probability that it needs $N>n$ iterations equals $u_n-l_n.$ \end{thm}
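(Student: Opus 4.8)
The plan is to reduce Theorem~\ref{thm_alg_4_valid} to Lemma~\ref{lemma_alg_3_valid} by showing that the auxiliary sequences $\tilde L_n$ and $\tilde U_n$ constructed in step~4 of Algorithm~\ref{alg_4} satisfy hypotheses \eqref{assu_L_U_ineq}, \eqref{assu_L_U_in_01}, \eqref{assu_L_U_monotone} and \eqref{assu_L_U_Expect}, and that moreover $\mathbb{E}\,\tilde L_n = \mathbb{E}\,L_n = l_n$ and $\mathbb{E}\,\tilde U_n = \mathbb{E}\,U_n = u_n$, so that the exit probabilities are unchanged. Once that is done, running Algorithm~\ref{alg_3} on $(\tilde L_n, \tilde U_n)$ is exactly Algorithm~\ref{alg_4}, and the conclusion (valid $s$-coin, probability $u_n - l_n$ of needing $N>n$ iterations) follows immediately from the lemma.

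First I would establish the almost-sure structural properties by induction on $n$. The base case is $\tilde L_0 \equiv 0 \le 1 \equiv \tilde U_0$. For the inductive step, note that \eqref{assu_L_super_MG}–\eqref{assu_U_sub_MG} give $L_n^* = \mathbb{E}(L_{n-1}\mid\mathcal F_n) \le L_n$ and $U_n^* = \mathbb{E}(U_{n-1}\mid\mathcal F_n) \ge U_n$ a.s., while \eqref{assu_L_U_ineq} applied at level $n-1$ inside the conditional expectation gives $L_n^* \le U_n^*$; combined with $L_n \le U_n$ from \eqref{assu_L_U_ineq} this yields $L_n^* \le L_n \le U_n \le U_n^*$, so the multipliers $\alpha_n := (L_n - L_n^*)/(U_n^* - L_n^*)$ and $\beta_n := (U_n^* - U_n)/(U_n^* - L_n^*)$ both lie in $[0,1]$ with $\alpha_n + \beta_n \le 1$ (one must handle the degenerate event $U_n^* = L_n^*$ separately, on which $L_n = U_n$ too and one simply sets the increments to zero). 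Writing \eqref{formula_tilde_L}–\eqref{formula_tilde_U} as $\tilde L_n = \tilde L_{n-1} + \alpha_n(\tilde U_{n-1} - \tilde L_{n-1})$ and $\tilde U_n = \tilde U_{n-1} - \beta_n(\tilde U_{n-1}-\tilde L_{n-1})$, and using the inductive hypothesis $\tilde L_{n-1} \le \tilde U_{n-1}$ with both in $[0,1]$, one reads off $\tilde L_{n-1} \le \tilde L_n \le \tilde U_n \le \tilde U_{n-1}$, which is \eqref{assu_L_U_monotone}, and then $\tilde L_n, \tilde U_n \in [0,1]$, which is \eqref{assu_L_U_in_01}. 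This part is routine algebra once the sign information on $L_n^*, U_n^*$ is in hand.

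The substantive step is the expectation identity. I would prove by induction that $\mathbb{E}(\tilde L_n \mid \mathcal F_{n,\infty}) = \tilde L_{n-1}$-type relations fail in general, so instead one works directly with the chain of $\sigma$-fields. The cleanest route: show $\mathbb{E}(\tilde L_n \mid \mathcal F_{0,n-1}) = \mathbb{E}(\tilde L_{n-1} \mid \mathcal F_{0,n-1})$ is \emph{not} what we want either; rather we want the unconditional $\mathbb{E}\,\tilde L_n = l_n$. Since $\tilde L_{n-1}, \tilde U_{n-1}$ are $\mathcal F_{0,n-1}$-measurable and $\mathcal F_{0,n-1}$ is independent of $\mathcal F_{n,\infty}$ in the relevant conditional sense one cannot use plain independence; instead I would take $\mathbb{E}(\,\cdot\mid\mathcal F_{n,\infty})$ of \eqref{formula_tilde_L}. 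The key observation is that $\alpha_n$ is $\mathcal F_n$-measurable while $(\tilde U_{n-1}-\tilde L_{n-1})$ and $\tilde L_{n-1}$ depend on the \emph{past} $L_1,\dots,L_{n-1}, U_1,\dots,U_{n-1}$; here the reverse-time martingale hypothesis is exactly what lets one write $\mathbb{E}(L_{n-1}\mid\mathcal F_{n,\infty}) = \mathbb{E}(L_{n-1}\mid\mathcal F_n) = L_n^*$, and more generally that conditioning the past on $\mathcal F_{n,\infty}$ reduces to conditioning on $\mathcal F_n$. I would set up the induction so that the joint law of $(\tilde L_{n-1}, \tilde U_{n-1})$ given $\mathcal F_{n,\infty}$ depends only on $L_n, U_n$ and that $\mathbb{E}(\tilde L_{n-1}\mid\mathcal F_{n,\infty})$, $\mathbb{E}(\tilde U_{n-1}\mid\mathcal F_{n,\infty})$, and $\mathbb{E}(\tilde U_{n-1}-\tilde L_{n-1}\mid\mathcal F_{n,\infty})$ can be computed; then taking $\mathbb{E}(\,\cdot\mid\mathcal F_{n,\infty})$ of \eqref{formula_tilde_L} and using $\mathbb{E}(\tilde U_{n-1}-\tilde L_{n-1}\mid\mathcal F_{n,\infty}) = U_n^* - L_n^*$ (the crucial reverse-martingale cancellation) makes the fraction telescope, giving $\mathbb{E}(\tilde L_n \mid \mathcal F_{n,\infty}) = L_n$ a.s., and symmetrically $\mathbb{E}(\tilde U_n\mid\mathcal F_{n,\infty}) = U_n$ a.s. Taking expectations then yields $\mathbb{E}\,\tilde L_n = l_n$ and $\mathbb{E}\,\tilde U_n = u_n$.

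The main obstacle, and the step I would spend the most care on, is precisely this last bookkeeping: correctly formulating and maintaining the inductive hypothesis that the conditional expectation of $(\tilde U_{n-1} - \tilde L_{n-1})$ given $\mathcal F_{n,\infty}$ equals $U_n^* - L_n^*$, and justifying the reduction of conditioning on the whole future $\sigma$-field $\mathcal F_{n,\infty}$ to conditioning on $\mathcal F_n$ alone — this is where hypotheses \eqref{assu_L_super_MG}–\eqref{assu_U_sub_MG}, stated as equalities $\mathbb{E}(L_{n-1}\mid\mathcal F_{n,\infty}) = \mathbb{E}(L_{n-1}\mid\mathcal F_n)$, do the real work. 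With that identity in place, the rest (closing the induction for the a.s.\ properties, invoking Lemma~\ref{lemma_alg_3_valid}, and reading off the iteration-count probability $u_n - l_n$) is immediate.
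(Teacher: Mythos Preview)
Your strategy is correct and matches the paper's: verify that $(\tilde L_n,\tilde U_n)$ satisfy \eqref{assu_L_U_ineq}--\eqref{assu_L_U_monotone} almost surely, show $\mathbb{E}\,\tilde L_n=l_n$ and $\mathbb{E}\,\tilde U_n=u_n$ via the reverse-time hypotheses, and invoke Lemma~\ref{lemma_alg_3_valid}. The paper organises the expectation step slightly differently---it first writes the closed-form product $\tilde U_n-\tilde L_n=\prod_{k=1}^{n}(U_k-L_k)/(U_k^*-L_k^*)$, expands $\tilde L_n$ accordingly, and then peels off factors by conditioning successively on $\mathcal F_{2,\infty},\mathcal F_{3,\infty},\dots,\mathcal F_{n,\infty}$---whereas you carry the identity $\mathbb{E}(\tilde L_n\mid\mathcal F_{n,\infty})=L_n$ (and $\mathbb{E}(\tilde U_{n-1}-\tilde L_{n-1}\mid\mathcal F_{n,\infty})=U_n^*-L_n^*$) by induction, using the tower property with $\mathcal F_{n-1,\infty}\supseteq\mathcal F_{n,\infty}$; these are the same telescoping computation in different packaging.

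One caution: your aside that the conditional \emph{law} of $(\tilde L_{n-1},\tilde U_{n-1})$ given $\mathcal F_{n,\infty}$ depends only on $(L_n,U_n)$ is stronger than the hypotheses warrant---\eqref{assu_L_super_MG}--\eqref{assu_U_sub_MG} control only conditional \emph{expectations}, not full conditional distributions---but you do not actually need it. The conditional-expectation identities alone, propagated through the tower property, suffice for the cancellation you describe.
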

\begin{proof}
We show that $\tilde{L}$ and $\tilde{U}$ satisfy  (\ref{assu_L_U_ineq}), (\ref{assu_L_U_in_01}), (\ref{assu_L_U_Expect}) and (\ref{assu_L_U_monotone}) and hence Algorithm \ref{alg_4} is valid due to Lemma \ref{lemma_alg_3_valid}.

Conditions (\ref{assu_L_U_ineq}), (\ref{assu_L_U_in_01}) and (\ref{assu_L_U_monotone}) are straightforward due to construction of $\tilde{L}$ and $\tilde{U}$ and (\ref{assu_L_super_MG}), (\ref{assu_U_sub_MG}). 

To prove (\ref{assu_L_U_Expect}) we show that the construction in step 4 of Algorithm \ref{alg_4} preserves expectation, i.e. \begin{equation} \label{eqn_expect_for_alg_4} \mathbb{E} \; \tilde{L}_n = \mathbb{E} \; L_n = l_n \quad \textrm{ and } \quad  \mathbb{E} \; \tilde{U}_n = \mathbb{E} \; U_n = u_n. \end{equation} It is straightforward to check that (\ref{eqn_expect_for_alg_4}) holds for $n=1,2.$ Moreover note that $\tilde{U}_0 - \tilde{L}_0 = 1$ a.s., $U_1^* - L_1^* = 1$ a.s. and from (\ref{formula_tilde_L}) and (\ref{formula_tilde_U}) we have \begin{eqnarray} \nonumber  \tilde{U}_n - \tilde{L}_n & = & \left( \tilde{U}_{n-1} - \tilde{L}_{n-1} \right) \frac{U_n - L_n}{U_n^* - L_n^*} \qquad \textrm{and hence} \\ \tilde{L}_n & = & \tilde{L}_{n-1} +  \frac{L_n - L_n^*}{U_n^* - L_n^*} \frac{U_{n-1} - L_{n-1}}{U_{n-1}^* - L_{n-1}^*}  \cdots \frac{U_{2} - L_{2}}{U_{2}^* - L_{2}^*} \left(U_1 - L_1\right).\qquad \label{eqn_telescope} \end{eqnarray}
Now we compute $\mathbb{E}\; \tilde{L}_n$ by induction, conditioning (\ref{eqn_telescope}) subsequently on $\mathcal{F}_{2,\infty}, \dots, \mathcal{F}_{n, \infty}$ and using (\ref{assu_L_super_MG}) and (\ref{assu_U_sub_MG}). Calculation of $\mathbb{E}\; \tilde{U}_n$ is identical.
\begin{eqnarray}
\nonumber
\mathbb{E}\; \tilde{L}_n & = & \mathbb{E}\; \tilde{L}_{n-1} + \mathbb{E}\left(\mathbb{E} \left(\frac{L_n - L_n^*}{U_n^* - L_n^*} \frac{U_{n-1} - L_{n-1}}{U_{n-1}^* - L_{n-1}^*}  \cdots \frac{U_{2} - L_{2}}{U_{2}^* - L_{2}^*} \left(U_1 - L_1\right) \Big|
 \mathcal{F}_{2,\infty}\right)\right) \\
&=&  \mathbb{E}\; L_{n-1} + \mathbb{E}\left(\frac{L_n - L_n^*}{U_n^* - L_n^*} \frac{U_{n-1} - L_{n-1}}{U_{n-1}^* - L_{n-1}^*}  \cdots \frac{U_{2} - L_{2}}{U_{2}^* - L_{2}^*} \mathbb{E} \left(U_1 - L_1 \big|
 \mathcal{F}_{2,\infty}\right)\right) \nonumber \\ \nonumber 
&=&  \mathbb{E}\; L_{n-1} + \mathbb{E}\left(\frac{L_n - L_n^*}{U_n^* - L_n^*} \frac{U_{n-1} - L_{n-1}}{U_{n-1}^* - L_{n-1}^*}  \cdots \frac{U_{3} - L_{3}}{U_{3}^* - L_{3}^*}\left(U_2 - L_2\right)\right) = \cdots \\
& = & \mathbb{E}\; L_{n-1} + \mathbb{E}\; (L_n - L_n^*) \; =\;  \mathbb{E}\left(\mathbb{E}\left( L_{n-1} +  L_n - L_n^* \big| \mathcal{F}_{n, \infty} \right) \right) = \mathbb{E}\;L_n. \nonumber 
\end{eqnarray}
\end{proof}
\begin{remark}
All of the discussed algorithms are valid if $n$ takes values along an increasing
sequence $n_i \nearrow \infty.$
\end{remark}
Now let us link once again the algorithmic development of this Section with construction of unbiased estimators. Lemma \ref{lemma_alg_1_valid} together with Theorem \ref{thm_alg_4_valid} result in the following construction of
sequential unbiased estimators based on under- and overestimating reverse time
super- and submartingale sequences. The estimators are sequential in the sense that the amount of input needed to produce them is random. 
\begin{thm}
Suppose that for an unknown value of interest $s \in \mathbf{R},$ there exist a constant $M < \infty$ and random sequences $L_n$ and $U_n$ s.t.
\begin{eqnarray}
\mathbb{P}(L_n \leq U_n) = 1 & & \textrm{for every} \quad n = 1,2,... \label{assu_thm_L_U_ineq} \nonumber \\
\mathbb{P}(L_n \in [-M, M]) = 1 &\textrm{and}& \mathbb{P}(U_n \in [-M, M]) = 1 \quad \textrm{for every} \quad n = 1, 2,...\qquad \label{assu_thm_L_U_in_MM} \nonumber \\
\mathbb{E}\;L_n = l_n \nearrow s &\textrm{and}& \mathbb{E}\;U_n = u_n \searrow s \label{assu_thm_L_U_Expect} \nonumber \\
\label{assu_thm_L_super_MG}
\mathbb{E}\;(L_{n-1}\;|\; \mathcal{F}_{n,\infty}) & = & \mathbb{E}\;(L_{n-1}\;|\; \mathcal{F}_{n}) \; \leq \; L_n \; \textrm{ a.s.} \quad \textrm{and} \nonumber \\ \label{assu__thm_U_sub_MG} \mathbb{E}\;(U_{n-1}\;|\; \mathcal{F}_{n,\infty}) & = & \mathbb{E}\;(U_{n-1}\;|\; \mathcal{F}_{n}) \; \geq \; U_n \; \textrm{ a.s.} \nonumber
\end{eqnarray}
Then one can construct an unbiased estimator of $s.$
\end{thm}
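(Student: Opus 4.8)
\section*{Proof proposal}

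The plan is to reduce the general real-valued case to the situation already settled in Theorem~\ref{thm_alg_4_valid}, where the quantity of interest lies in $[0,1]$, by means of an increasing affine change of variables, and then to read the estimator off from the construction underlying that theorem. Throughout I may assume $M>0$, since $M=0$ forces $L_n=U_n=0$ a.s.\ and $s=0$, for which $\hat S\equiv 0$ is trivially an unbiased estimator.

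First I would note that the hypotheses force $s\in[-M,M]$: from $l_n\le M$ and $l_n\nearrow s$ we get $s\le M$, and from $u_n\ge -M$ and $u_n\searrow s$ we get $s\ge -M$. Hence $\bar s:=(s+M)/(2M)\in[0,1]$ is well defined. Set $\bar L_n:=(L_n+M)/(2M)$ and $\bar U_n:=(U_n+M)/(2M)$. Since $\varphi(x):=(x+M)/(2M)$ is an increasing affine bijection of $[-M,M]$ onto $[0,1]$, the sequences $\bar L_n,\bar U_n$ inherit $\bar L_n\le\bar U_n$ a.s.\ and $\bar L_n,\bar U_n\in[0,1]$ a.s., i.e.\ (\ref{assu_L_U_ineq}) and (\ref{assu_L_U_in_01}); by linearity of expectation $\mathbb E\,\bar L_n=(l_n+M)/(2M)\nearrow\bar s$ and $\mathbb E\,\bar U_n=(u_n+M)/(2M)\searrow\bar s$, which is (\ref{assu_L_U_Expect}); and because $\varphi$ is a bijection, $\sigma\{\bar L_n,\bar U_n\}=\sigma\{L_n,U_n\}=\mathcal F_n$, so $\mathcal F_{n,\infty}$ is also unchanged and the relations (\ref{assu_L_super_MG})--(\ref{assu_U_sub_MG}) transfer verbatim from $L_n,U_n$ to $\bar L_n,\bar U_n$ by linearity of conditional expectation together with the monotonicity of $\varphi$.

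Thus $\bar L_n,\bar U_n$ satisfy exactly the hypotheses of Theorem~\ref{thm_alg_4_valid}. I would then run step~4 of Algorithm~\ref{alg_4} on $\bar L_n,\bar U_n$ to produce $\tilde L_n,\tilde U_n$ via (\ref{formula_tilde_L})--(\ref{formula_tilde_U}). The proof of Theorem~\ref{thm_alg_4_valid} shows that these are a.s.\ monotone, take values in $[0,1]$, satisfy $\tilde L_n\le\tilde U_n$ with $\mathbb E(\tilde U_n-\tilde L_n)=u_n-l_n\to 0$, and (by the argument of Lemma~\ref{lemma_alg_3_valid}, since $\tilde U_n-\tilde L_n$ is a nonnegative decreasing sequence a.s.) converge to a common a.s.\ limit $\hat{\bar S}:=\lim_n\tilde L_n=\lim_n\tilde U_n\in[0,1]$ with $\mathbb E\,\hat{\bar S}=\bar s$ (this is precisely the random variable identified in the Remark after Lemma~\ref{lemma_alg_3_valid}). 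Finally I would set $\hat S:=2M\hat{\bar S}-M=\varphi^{-1}(\hat{\bar S})$; then $\hat S\in[-M,M]$ a.s.\ and $\mathbb E\,\hat S=2M\bar s-M=s$, so $\hat S$ is the desired unbiased estimator, produced online from the same (randomly terminated) stream of inputs as Algorithm~\ref{alg_4}.

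The only genuinely delicate point, which I would want to pin down, is the well-definedness of the recursion (\ref{formula_tilde_L})--(\ref{formula_tilde_U}), namely that the denominator $U_n^\ast-L_n^\ast$ is positive whenever the bounds have not yet collapsed (if $\mathbb E(U_{n-1}-L_{n-1}\mid\mathcal F_n)=0$ then $U_{n-1}=L_{n-1}$ a.s., so $\hat{\bar S}$ is already determined and one stops); but this is an issue internal to Theorem~\ref{thm_alg_4_valid} and is inherited unchanged, since $\varphi$ scales numerator and denominator by the same factor. Everything else is the routine observation that an increasing affine bijection commutes with the order relations, with expectation, and with conditional expectation.
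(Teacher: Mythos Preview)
Your proposal is correct and follows essentially the same approach as the paper: rescale via the affine map $x\mapsto (x+M)/(2M)$ to land in $[0,1]$, invoke Theorem~\ref{thm_alg_4_valid}/Algorithm~\ref{alg_4}, and undo the rescaling. The paper's proof is a two-line sketch of exactly this; the only cosmetic difference is that the paper reads off the unbiased estimator of $(M+s)/(2M)$ as the coin $C_s$ produced by Algorithm~\ref{alg_4} (via Lemma~\ref{lemma_alg_1_valid}), whereas you take the a.s.\ limit $\hat{\bar S}$ of the monotone sequences $\tilde L_n,\tilde U_n$---both are valid.
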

\begin{proof}
After rescaling, one can use Algorithm 4 to sample events of probability $(M+s)/2M,$ which gives an unbiased estimator of $(M+s)/2M$ and consequently of $s.$
\end{proof}

\section{Application to the Bernoulli Factory Problem} \label{sec_NacuPeres}
Based on Section \ref{sec_theory}, we provide here a practical version of
the Nacu-Peres algorithm for simulating an $f(p)-$coin from a sequence of $p-$coins, where $f(p) =
\min\{2p, 1- 2\varepsilon\}.$ This is central to the general version of the Bernoulli factory problem, as \cite{NacuPeres} develops a calculus for collapsing simulation of a real analytic function, say $g,$ to simulation of $f(p) =
\min\{2p, 1- 2\varepsilon\}.$ Briefly, one takes a series expansion of $g$ and uses a composition of appropriate techniques (e.g. for simulating a sum or a difference of simulable functions). We refer to the original paper for details.   

In particular we prove Proposition
\ref{prop_mod_NacuPeres}, a general result, which is a minor
modification of Proposition 3 in \cite{NacuPeres}. However its proof,
different from the original one, links polynomial envelopes of $f$
with the framework of Section \ref{sec_theory} by identifying
terms. It results in an immediate application of Algorithm
\ref{alg_4}.

\begin{prop} \label{prop_mod_NacuPeres}
An algorithm that simulates a function $f$ on $\mathcal{P} \subseteq (0,1)$ exists if and only if for all $n\geq 1$ there exist polynomials $g_n(p)$ and $h_n(p)$ of the form
\begin{equation} \nonumber
g_n(p) = \sum_{k=0}^n \binom{n}{k}a(n,k)p^k(1-p)^{n-k} \quad \textrm{and} \quad h_n(p) = \sum_{k=0}^n \binom{n}{k}b(n,k)p^k(1-p)^{n-k},
\end{equation} s.t.
\begin{enumerate}
\item[(i)] $0 \leq a(n,k) \leq b(n,k) \leq 1,$
\item[(ii)] $\lim_{n\to \infty} g_n(p) = f(p) = \lim_{n\to \infty} h_n(p),$
\item[(iii)] For all $m <n,$ their coefficients satisfy \begin{eqnarray}\label{cond_coeff_a_b_MG} a(n,k)  \geq  \sum_{i=0}^k \frac{\binom{n-m}{k-i}\binom{m}{i}}{\binom{n}{k}} a(m,i), & &  b(n,k)  \leq  \sum_{i=0}^k \frac{\binom{n-m}{k-i}\binom{m}{i}}{\binom{n}{k}} b(m,i).\;\qquad \end{eqnarray}
\end{enumerate} 
\end{prop}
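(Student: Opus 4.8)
The plan is to connect the polynomial envelopes $g_n, h_n$ with the reverse time martingale framework of Section~\ref{sec_theory}. The key observation is that if $X_1, X_2, \dots$ are tosses of a $p$-coin, then the number of successes $S_n = X_1 + \cdots + X_n$ after $n$ tosses is $\mathrm{Binomial}(n,p)$, so that $g_n(p) = \mathbb{E}\, a(n, S_n)$ and $h_n(p) = \mathbb{E}\, b(n, S_n)$. Thus setting $L_n := a(n, S_n)$ and $U_n := b(n, S_n)$, one immediately gets from (i) that $L_n \leq U_n$ with both in $[0,1]$, which is (\ref{assu_L_U_ineq}) and (\ref{assu_L_U_in_01}), and from (ii) together with monotone convergence that $\mathbb{E} L_n = g_n(p) \to f(p)$ and $\mathbb{E} U_n = h_n(p) \to f(p)$, giving (\ref{assu_L_U_Expect}) with $s = f(p)$ (monotonicity of $l_n, u_n$ will follow from the (super/sub)martingale property established below, or can be assumed WLOG as in Section~\ref{sec_theory}). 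So the substance of the proof reduces to verifying the reverse time supermartingale/submartingale conditions (\ref{assu_L_super_MG}) and (\ref{assu_U_sub_MG}).

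\medskip

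The heart of the argument is therefore the following: I would take $\mathcal{F}_n = \sigma\{S_n, S_{n+1}, \dots\}$ — equivalently the $\sigma$-algebra generated by the \emph{unordered} first $n$ tosses together with all later tosses — and show that $\mathbb{E}(L_{n-1} \mid \mathcal{F}_n) \leq L_n$ a.s., i.e. $\mathbb{E}(a(n-1, S_{n-1}) \mid S_n) \geq$ — wait, the direction: I would show $\mathbb{E}(a(m, S_m) \mid S_n) \leq a(n, S_n)$ for $m < n$, which is exactly the reverse time supermartingale property when specialized to $m = n-1$, and the full statement for general $m < n$ then follows by the tower property. The point is that conditionally on $S_n = k$, the variable $S_m$ is hypergeometric: $\mathbb{P}(S_m = i \mid S_n = k) = \binom{n-m}{k-i}\binom{m}{i}/\binom{n}{k}$, since given $k$ successes among $n$ exchangeable tosses, the number among the first $m$ is the number of ``successes'' in a sample of size $m$ drawn without replacement from an urn with $k$ successes and $n-k$ failures. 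Hence
\begin{equation} \nonumber
\mathbb{E}(a(m, S_m) \mid S_n = k) = \sum_{i=0}^k \frac{\binom{n-m}{k-i}\binom{m}{i}}{\binom{n}{k}} a(m,i),
\end{equation}
and condition (iii) says precisely that this is $\leq a(n,k) = L_n$ on $\{S_n = k\}$; symmetrically the $b$-inequality in (iii) gives the submartingale property for $U_n$. The equality $\mathbb{E}(L_{n-1}\mid \mathcal{F}_{n,\infty}) = \mathbb{E}(L_{n-1}\mid \mathcal{F}_n)$ required in (\ref{assu_L_super_MG}) holds because, conditionally on $S_n$, the later tosses $X_{n+1}, X_{n+2}, \dots$ are independent of $(X_1, \dots, X_n)$ and hence of $S_{n-1}$.

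\medskip

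With all hypotheses of Theorem~\ref{thm_alg_4_valid} verified for $L_n = a(n,S_n)$, $U_n = b(n,S_n)$ and $s = f(p)$, the ``if'' direction is complete: Algorithm~\ref{alg_4} outputs a valid $f(p)$-coin, using only the tosses $X_1, X_2, \dots$ and one auxiliary uniform $G_0$, which in particular establishes existence of a simulating algorithm. For the ``only if'' direction I would invoke the characterization already in the literature (this is the direction essentially due to \cite{KeaneOBrien}/\cite{NacuPeres}): if an algorithm simulating $f$ exists, then after $n$ tosses the algorithm has, on each string of outcomes, either decided the output coin or not, and collecting the strings on which it has output $1$ and the strings on which it \emph{could still} output $1$ yields binomial-form polynomials $g_n \leq f \leq h_n$ with coefficients in $[0,1]$; the consistency condition (iii) is forced by the fact that the decision made after $m$ tosses cannot be revoked after $n > m$ tosses (a ``$1$'' committed at stage $m$ persists into the lower polynomial at stage $n$, a ``$0$'' committed at stage $m$ is excluded from the upper polynomial at stage $n$), which translates via the same hypergeometric identity into (\ref{cond_coeff_a_b_MG}). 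I expect the main obstacle — or at least the only place requiring genuine care — to be the bookkeeping in this ``only if'' direction, making precise how an arbitrary simulation algorithm induces the coefficient arrays $a(n,k), b(n,k)$ and why refinement forces the combinatorial inequalities; the ``if'' direction, by contrast, is now essentially immediate given Section~\ref{sec_theory}. Since this equivalence is a minor modification of Proposition~3 of \cite{NacuPeres}, I would keep the ``only if'' part brief and refer to the original, emphasizing instead the new martingale reading of the ``if'' part.
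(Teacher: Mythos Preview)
Your proposal is correct and follows essentially the same route as the paper: define $L_n=a(n,S_n)$, $U_n=b(n,S_n)$ from the partial sums $S_n$ of the $p$-coins, verify (\ref{assu_L_U_ineq})--(\ref{assu_L_U_in_01}) from (i) and (\ref{assu_L_U_Expect}) from (ii), and obtain the reverse time super/submartingale conditions (\ref{assu_L_super_MG})--(\ref{assu_U_sub_MG}) from (iii) via the hypergeometric law of $S_m$ given $S_n$, then invoke Theorem~\ref{thm_alg_4_valid}; the ``only if'' direction is likewise deferred to \cite{NacuPeres}. The only cosmetic difference is that the paper works with $\mathcal{G}_n=\sigma(H_n)$ and notes $\mathcal{F}_n\subseteq\mathcal{G}_n$, whereas you directly enlarge $\mathcal{F}_n$ to $\sigma\{S_n,S_{n+1},\dots\}$---both lead to the same hypergeometric computation.
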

\begin{proof}
We skip the implication \textit{algorithm $\Rightarrow$ polynomials,} as it has been shown in \cite{NacuPeres}, and focus on proving \textit{polynomials $\Rightarrow$ algorithm} using framework of Section~\ref{sec_theory}. Let $X_1, X_2, \dots$ be a sequence of independent tosses of a $p-$coin. Define random sequences $\{L_n, U_n\}_{n\geq 1}$ as follows: if $\sum_{i=1}^n X_i = k,$ then let $L_n = a(n,k)$ and $U_n = b(n,k).$ In the rest of the proof we check that (\ref{assu_L_U_ineq}), (\ref{assu_L_U_in_01}), (\ref{assu_L_U_Expect}), (\ref{assu_L_super_MG}) and (\ref{assu_U_sub_MG}) hold for  $\{L_n, U_n\}_{n\geq 1}$ with $s=f(p).$ Thus executing Algorithm \ref{alg_4} with $\{L_n, U_n\}_{n\geq 1}$ yields a valid $f(p)-$coin.

Clearly (\ref{assu_L_U_ineq}) and (\ref{assu_L_U_in_01}) hold due to (i). For (\ref{assu_L_U_Expect}) note that $\mathbb{E}\;L_n = g_n(p) \nearrow f(p)$ and  $\mathbb{E}\;U_n = h_n(p) \searrow f(p).$ To obtain (\ref{assu_L_super_MG}) and (\ref{assu_U_sub_MG}) define the sequence of random variables $H_n$ to be the number of heads in $\{X_1, \dots, X_n\},$ i.e. $H_n = \sum_{i=1}^n X_i$ and let $\mathcal{G}_n = \sigma(H_n).$ Thus $L_n = a(n, H_n)$ and $U_n = b(n, H_n),$ hence $\mathcal{F}_n \subseteq \mathcal{G}_n$ and it is enough to check that $\mathbb{E}(L_m|\mathcal{G}_n) \leq L_n$ and $\mathbb{E}(U_m|\mathcal{G}_n) \geq U_n$ for $m<n.$ The distribution of $H_m$ given $H_n$ is hypergeometric and $$\mathbb{E}(L_m|\mathcal{G}_n)  = \mathbb{E}( a(m, H_m)| H_n) = \sum_{i=0}^{H_n} \frac{\binom{n-m}{H_n-i}\binom{m}{i}}{\binom{n}{H_n}}a(m,i) \leq a(n, H_n) = L_n.$$
Clearly the distribution of $H_m$ given $H_n$ is the same as the distribution of $H_m$ given $\{H_n, H_{n+1}, \dots \}.$ The argument for $U_n$ is identical.
\end{proof}
\begin{remark} \label{rem_factory}
In contrast to \cite{NacuPeres}, throughout this section we simulate $f$ in the weak sense, i.e. we use $U(0,1)$ as an auxiliary random variable. This is the natural approach in applications and also this is equivalent to strong simulability if $\mathcal{P} \subseteq (0,1),$ c.f. \cite{KeaneOBrien}.  
\end{remark}
To give a more complete view of the Bernoulli factory problem in the framework of Section \ref{sec_theory}, and before moving on to practical aspects of the problem, we show, as a corollary from Lemma \ref{lemma_alg_1_valid}, a result originally established in \cite{KeaneOBrien} and also provided in \cite{NacuPeres}, namely that generating $\min\{2p, 1\}-$coins from $p-$coins is not possible.

\begin{cor}\label{cor_p2p_does_not_exist} An algorithm that simulates $f(p) = 2p$ for $p \in \mathcal{P} = (0,1/2)$ does not exists.
\end{cor}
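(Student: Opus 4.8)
The plan is to derive a contradiction from Lemma~\ref{lemma_alg_1_valid}. Suppose an algorithm simulating $f(p)=2p$ on $\mathcal{P}=(0,1/2)$ existed. By Lemma~\ref{lemma_alg_1_valid}, for every $p\in(0,1/2)$ there would be an estimator $\hat S=\hat S_p$, a function of the coin tosses $X_1,X_2,\dots$ (and possibly of the auxiliary $U(0,1)$ variable), with $\hat S\in[0,1]$ a.s.\ and $\mathbb{E}_p\hat S=2p$. The key structural fact is that any such estimator, after integrating out the auxiliary randomness, is determined by a single measurable function $\phi:\{0,1\}^{\mathbb{N}}\to[0,1]$ (the conditional expectation of $\hat S$ given the toss sequence), which does \emph{not} depend on $p$ since the algorithm is not told $p$; and $\mathbb{E}_p\,\phi(X_1,X_2,\dots)=2p$ for all $p\in(0,1/2)$.

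First I would make precise the claim that $\hat S$ can be replaced by such a $p$-independent $\phi$. The source of randomness in Algorithm~\ref{alg_3}/\ref{alg_4} is the $p$-coin tosses together with i.i.d.\ $U(0,1)$ variables; conditioning on the toss sequence and averaging over the uniforms produces $\phi(x_1,x_2,\dots)=\mathbb{E}[\hat S\mid X=x]\in[0,1]$, and by the tower property $\mathbb{E}_p\phi(X)=\mathbb{E}_p\hat S=2p$. Next I would push $p\uparrow 1/2$: by bounded convergence (dominated by the constant $1$), $\lim_{p\to 1/2^-}\mathbb{E}_p\phi(X)=\mathbb{E}_{1/2}\phi(X)\le 1$, while the right-hand side $2p\to 1$. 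So $\mathbb{E}_{1/2}\phi(X)=1$, which, since $0\le\phi\le1$, forces $\phi(X)=1$ a.s.\ under $p=1/2$, i.e.\ $\phi=1$ on a set of $\mathrm{Bernoulli}(1/2)^{\otimes\mathbb{N}}$-measure one. But the measures $\mathrm{Bernoulli}(p)^{\otimes\mathbb{N}}$ for different $p\in(0,1)$ are mutually singular, so this tells us nothing directly about $\phi$ under $p<1/2$; the pressure must instead come from an analyticity/continuity argument on $p\mapsto\mathbb{E}_p\phi(X)$ near a boundary point, or from a finite-$n$ truncation.

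The cleaner route, and the one I would actually carry out, is to use the first $n$ tosses only as far as possible and exploit the polynomial structure. Fix $p_0\in(0,1/2)$ close to $1/2$. The function $p\mapsto\mathbb{E}_p\phi(X)$ need not be polynomial, but one can condition on $H_n=\sum_{i=1}^n X_i$: writing $a_n(k)=\mathbb{E}[\phi(X)\mid H_n=k]\in[0,1]$, we get $2p=\sum_{k=0}^n\binom{n}{k}a_n(k)p^k(1-p)^{n-k}$ for \emph{all} $p\in(0,1/2)$ and all $n$. A polynomial identity valid on an interval is an identity of polynomials, so $2p=\sum_k\binom{n}{k}a_n(k)p^k(1-p)^{n-k}$ as polynomials in $p$. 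Evaluating at $p=1/2$: $1 = 2^{-n}\sum_k\binom{n}{k}a_n(k)\le 2^{-n}\sum_k\binom{n}{k}=1$, with equality forcing $a_n(k)=1$ for every $k$ with $0\le k\le n$. But then $\sum_k\binom{n}{k}a_n(k)p^k(1-p)^{n-k}=1\ne 2p$ for $p\ne1/2$, a contradiction (for any $n\ge1$). Hence no such $\phi$, and no such algorithm, exists.

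The main obstacle is the first step: rigorously reducing an arbitrary simulating ``algorithm'' to a single $p$-independent $[0,1]$-valued function of the toss sequence, and justifying that its conditional expectation given $H_n$ yields an honest Bernstein-type polynomial identity in $p$ on $(0,1/2)$. Once that reduction and the ``polynomial identity on an interval $\Rightarrow$ identity of coefficients'' step are in place, evaluation at the endpoint $p=1/2$ (where $2p=1$ saturates the $[0,1]$ bound) does all the work and the contradiction is immediate. One should also note the degenerate cases $p\to 0$ give $\mathbb{E}_p\phi\to 0$, forcing $a_n(0)=0$, which is consistent but not needed; the binding constraint is at the upper end.
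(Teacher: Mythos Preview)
There is a genuine gap in your ``cleaner route.'' You set $a_n(k)=\mathbb{E}[\phi(X)\mid H_n=k]$ and then treat it as a constant independent of $p$, so that
\[
2p=\sum_{k=0}^n\binom{n}{k}a_n(k)p^k(1-p)^{n-k}
\]
becomes a polynomial identity on $(0,1/2)$. But $a_n(k)$ \emph{does} depend on $p$. The conditional law of $(X_1,\dots,X_n)$ given $H_n=k$ is indeed $p$-free (hypergeometric), yet $\phi$ is a function of the whole infinite sequence $X=(X_1,X_2,\dots)$, and the tail $(X_{n+1},X_{n+2},\dots)$ remains i.i.d.\ Bernoulli$(p)$ conditionally on $H_n$. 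Since the algorithm's stopping time is unbounded, $\phi$ genuinely involves those future tosses, so $a_n(k)=a_n(k,p)$ and your display is not a polynomial identity in $p$; the evaluation at $p=1/2$ then proves nothing. Your argument would be valid only for algorithms that use at most $n$ tosses for some fixed $n$, which is exactly the case the paper's reduction must avoid.

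The paper's proof circumvents this by a finite-horizon change-of-measure argument rather than a polynomial identity. One fixes $p=1/4$, finds $\delta>0$ and a finite $t_0$ with $\mathbb{P}_{1/4}(S\le 1/2,\;T\le t_0)\ge\delta/2$, and then uses that $\mathbb{P}_{p\mid t_0}$ dominates $\mathbb{P}_{1/4\mid t_0}$ uniformly for $p\in[1/4,1/2)$ with Radon--Nikodym derivative bounded below by $2^{-t_0}$. This transfers the event $\{S\le 1/2\}$ to all such $p$ with probability at least $\delta 2^{-(t_0+1)}$, and letting $p\nearrow 1/2$ contradicts $\mathbb{E}_pS=2p\to 1$ with $S\in[0,1]$. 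The essential difference is that the paper works on a \emph{fixed} finite time horizon obtained from a single reference measure, rather than trying to force a polynomial structure that the infinite-horizon estimator does not possess.
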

\begin{proof} We show that there does not exists an unbiased estimator of $2p$ for $p \in (0,1/2)$ that takes values in $[0,1]$ and we conclude the corollary from Lemma~\ref{lemma_alg_1_valid}. The idea of the proof is to show that such an estimator must take values smaller then $1/2$ with strictly positive probability independent of $p$ and then let $p \nearrow 1/2$ so that $2p \nearrow 1.$

Let $S$ be such an estimator and let $X_1, X_2, \dots$ be a sequence of $p-$coins. We allow $S$ to be sequential and use an auxiliary random variable $R_0$ independent of the $p-$coins. So $$S = S\big(\{X_1, X_2, \dots,\}, T, R_0\big) = S\big(\{X_1, X_2, \dots,X_T\}, R_0 \big),$$ where $T$ is a stopping time with respect to $\sigma\{\mathcal{F}_{1,n}, \mathcal{G}\},$ where $\{\mathcal{F}_{1,n}\}_{n\geq 1}$ is the filtration generated by $X_1, X_2, \dots$ and $\mathcal{G}$ is a $\sigma-$algebra independent of $\mathcal{F}_{1,n}$ and generated by $R_0.$ Clearly the joint distribution of $\left\{ \{X_1, X_2, \dots\}, T, R_0 \right\}$ depends on $p.$ We denote it by $\mathbb{P}_p$ and let $\mathbb{P}_{p|t}$ be the projection of $\mathbb{P}_p$ on $\{X_1, X_2, \dots, X_t\}.$ Now fix $p = 1/4.$ Since $2p = 1/2$ we have $\delta := \mathbb{P}_{1/4}(S \leq 1/2) > 0.$ Moreover there exists such an $t_0$ that $$\mathbb{P}_{1/4}(S \leq 1/2;\; T \leq t_0) \geq \delta/2.$$ Note that $\mathbb{P}_{p|t_0}$ is absolutely continuous with respect to $\mathbb{P}_{1/4|t_0}$ for all $p \in [1/4, 1/2)$ and
$$\inf_{p \in [1/4, 1/2)} \inf_{A \subseteq \{0,1\}^{t_0}} \frac{\mathbb{P}_{p|t_0}(A)}{\mathbb{P}_{1/4|t_0}(A)} \geq 2^{-t_0},$$
and consequently for every $p\in [1/4, 1/2)$ we have $$\mathbb{P}_{p}(S \leq 1/2)  \geq \mathbb{P}_{p}(S \leq 1/2;\; T \leq t_0) \geq 2^{-t_0} \mathbb{P}_{1/4}(S \leq 1/2;\; T \leq t_0) \geq  \delta 2^{-(t_0+1)}.$$ Now let $p \nearrow 1/2.$ This combined with $S \in [0,1]$ contradicts unbiasedness. 
\end{proof}

Given a function $f,$ finding polynomial envelopes satisfying properties required by Proposition~\ref{prop_mod_NacuPeres} is not easy. Section 3 of \cite{NacuPeres} provides explicit formulas for polynomial envelopes of $f(p) = \min\{2p, 1-2\varepsilon\}$ that satisfy conditions of Proposition \ref{prop_mod_NacuPeres}, precisely $a(n,k)$ and $b(n,k)$ satisfy (ii) and (iii) and one can easily compute $n_0 = n_0(\varepsilon)$ s.t. for $n \geq n_0$ condition (i) also holds, which is enough for the algorithm (however $n_0$ is substantial, e.g. $n_0(\varepsilon) = 32768$ for $\varepsilon = 0.1$ and it increases as $\varepsilon$ decreases). By Theorem \ref{thm_alg_4_valid} the probability that Algorithm \ref{alg_4} needs $N>n$ inputs equals $h_n(p) - g_n(p).$ The polynomials provided in \cite{NacuPeres} satisfy $h_n(p) - g_n(p) \leq C\rho^n$ for $p \in [0, 1/2 - 4\varepsilon]$ guaranteeing fast convergence, and $h_n(p) - g_n(p) \leq Dn^{-1/2}$ elsewhere. Using similar techniques one can establish polynomial envelopes s.t. $h_n(p) - g_n(p) \leq  C\rho^n$ for $p \in [0,1]\diagdown (1/2 - (2+c)\varepsilon, 1/2 - (2-c)\varepsilon).$ We do not pursue this here, however in applications it will be often essential to obtain polynomial approximations tailored for a specific problem and with desired properties. Moreover, we note that despite the fact that the techniques developed in \cite{NacuPeres} for simulating a real analytic $g$ exhibit exponentially decaying tails, they are often not practical. Nesting $k$ times the algorithm for $f(p)= \min\{2p, 1-2\varepsilon\}$ is very inefficient. One needs at least $n_0(\varepsilon)^k$ of original $p-$coins for a single output.

As mentioned earlier, a naive implementation of the Nacu-Peres
algorithm requires dealing with sets of $\{0,1\}$ strings of
exponential size. Other implementations with reduced algorithmic cost
are certainly possible with additional effort. However, our martingale
approach that uses Algorithm~\ref{alg_4} in the way indicated in the
proof of Proposition~\ref{prop_mod_NacuPeres}, avoids this problem
completely (a C-code for $f(p) = \min\{2p, 1-2\varepsilon\}$ is
available on request). 

Nevertheless, we note that for both algorithms, i.e. the original Nacu-Peres version and our martingale modification, the same number of original $p-$coins will be used for a single $f(p)-$coin output with $f(p) = \min\{2p, 1-2\varepsilon\}$ and consequently also for simulating any real analytic function using methodology of \cite{NacuPeres} Section~4. A significant improvement in terms of $p-$coins can be achieved only if the  monotone super/sub-martingales can be constructed directly and used along with Algorithm~\ref{alg_3}. This is discussed in the next subsection.

\subsection{Bernoulli Factory for alternating series expansions}
\label{sec:bern-alt}

In the following Proposition we describe an important class of functions for which an $f(p)-$coin can be simulated  by direct application of Algorithm~3.  

\begin{prop}\label{prop_alternating}
Let $f:[0,1] \to [0,1]$ have an alternating series expansion\begin{eqnarray}
\nonumber
f(p) & = & \sum_{k=0}^{\infty}(-1)^k a_k  p^k \qquad \quad \textrm{with} \quad 1 \geq a_0 \geq a_1 \geq \dots
\end{eqnarray}
Then an $f(p)-$coin can be simulated by Algorithm~3 and the probability that it needs $N > n$ iterations equals $a_n p^n.$  
\end{prop}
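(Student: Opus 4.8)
The plan is to build random sequences $\{L_n,U_n\}_{n\ge 1}$ meeting the hypotheses of Lemma~\ref{lemma_alg_3_valid}, i.e.\ (\ref{assu_L_U_ineq}), (\ref{assu_L_U_in_01}), (\ref{assu_L_U_monotone}) and (\ref{assu_L_U_Expect}) with $s=f(p)$, and then to read the iteration count off the identity $\mathbb{P}(N>n)=u_n-l_n$ from that lemma. Let $X_1,X_2,\dots$ be independent tosses of the $p$-coin, put $Y_0\equiv 1$ and $Y_k=\prod_{i=1}^k X_i$ for $k\ge 1$. Then each $Y_k\in\{0,1\}$, the sequence $Y_k$ is nonincreasing in $k$, and $\mathbb{E}Y_k=p^k$, so $Y_k$ is the natural unbiased estimator of the monomial $p^k$. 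Define the running estimator $T_m=\sum_{k=0}^m(-1)^k a_k Y_k$, so that $\mathbb{E}T_m=\sum_{k=0}^m(-1)^k a_k p^k=:S_m$, the $m$-th partial sum of the series for $f$. (Since $f(1)$ is finite the series converges at $p=1$, which forces $a_k\to 0$ and hence $a_k\ge 0$.)

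The key observation is that along any realization $(Y_0,\dots,Y_m)$ is a block of $1$'s followed by $0$'s, so $T_m$ equals a partial sum $\sum_{k=0}^{j}(-1)^k a_k$ of the \emph{deterministic} alternating series $\sum_k(-1)^k a_k$; every such partial sum lies in $[a_0-a_1,a_0]\subseteq[0,1]$, which is (\ref{assu_L_U_in_01}). Moreover $T_m-T_{m-1}=(-1)^m a_m Y_m$, with $0\le a_m Y_m\le a_{m-1}Y_{m-1}$ because $Y_m\le Y_{m-1}$ and $a_m\le a_{m-1}$. Now set, for iteration $n\ge 1$,
\[
U_n=T_{e(n)},\qquad L_n=T_{o(n)},
\]
where $e(n)$ (resp.\ $o(n)$) denotes the largest even (resp.\ odd) integer not exceeding $n$; in words, the upper estimate is refined at even iterations and the lower one at odd iterations. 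Since $e(n)$ and $o(n)$ differ by exactly $1$, one checks in either parity that $U_n-L_n=a_n Y_n\ge 0$, which is (\ref{assu_L_U_ineq}); and whichever of $L_n,U_n$ moves from step $n-1$ to step $n$ does so by the amount $a_{n-1}Y_{n-1}-a_n Y_n\ge 0$ — upward for $L_n$, downward for $U_n$ — while the other coordinate stays fixed, which is (\ref{assu_L_U_monotone}).

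Finally, for (\ref{assu_L_U_Expect}) note that $\mathbb{E}U_n=S_{e(n)}$ and $\mathbb{E}L_n=S_{o(n)}$; since $a_k p^k$ is nonincreasing in $k$ and $a_k p^k\to 0$, the even partial sums satisfy $S_{2m}\searrow f(p)$ and the odd ones $S_{2m+1}\nearrow f(p)$, so $\mathbb{E}U_n\searrow f(p)$ and $\mathbb{E}L_n\nearrow f(p)$. Lemma~\ref{lemma_alg_3_valid} then applies: Algorithm~\ref{alg_3} run with this $\{L_n,U_n\}$ outputs a valid $f(p)$-coin, and it needs $N>n$ iterations with probability $u_n-l_n=\mathbb{E}(U_n-L_n)=\mathbb{E}(a_n Y_n)=a_n p^n$. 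The one point requiring care is the index bookkeeping that makes the telescoping difference collapse to the single term $a_n Y_n$, so that the $[0,1]$-boundedness and the exact tail $a_n p^n$ come out simultaneously; the remaining verifications are routine alternating-series estimates.
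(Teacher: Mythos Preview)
Your proof is correct and follows essentially the same approach as the paper: both construct the identical sequences $L_n,U_n$ from products $\prod_{i=1}^k X_i$ of $p$-coin tosses, verify the hypotheses of Lemma~\ref{lemma_alg_3_valid}, and read off the tail probability $a_n p^n$. The only difference is presentational---the paper writes $L_n,U_n$ recursively (updating one coordinate per step according to the parity of $n$), while you give the equivalent closed form via the partial sums $T_m$ and the indices $e(n),o(n)$; your version also spells out the $[0,1]$-boundedness argument and the nonnegativity $a_k\ge 0$ that the paper leaves implicit.
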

\begin{proof}
Let $X_1, X_2, \dots $ be a sequence of $p-$coins and define \begin{eqnarray}
U_0 & := & a_0 \qquad \qquad L_0 \;\; := \;\; 0,\nonumber \\
L_{n} &:=& \left\{\begin{array}{lll} U_{n-1} - a_{n} \prod_{k=1}^{n}X_k & \textrm{if} & n \; \textrm{ is odd,} \\ L_{n-1} & \textrm{if} & n \; \textrm{ is even,} \end{array} \right. \nonumber \\
U_{n} &:=& \left\{\begin{array}{lll} U_{n-1}  & \textrm{if} & n \; \textrm{ is odd,} \\ L_{n-1} + a_n  \prod_{k=1}^{n} X_k & \textrm{if} & n \; \textrm{ is even.} \end{array} \right. \nonumber
\end{eqnarray} 
Clearly (\ref{assu_L_U_ineq}), (\ref{assu_L_U_in_01}), (\ref{assu_L_U_monotone}) and (\ref{assu_L_U_Expect}) are satisfied with $s=f(p).$ Moreover, $$u_n - l_n = \mathbb{E}\,U_n - \mathbb{E}\,L_n =  a_n p^n \leq a_n.$$ Thus if $a_n \to 0,$ the algorithm converges for $p \in [0,1],$ otherwise for $p \in [0,1).$
\end{proof}

Next we illustrate the difference between application of Algorithm~\ref{alg_4} based on the Nacu-Peres approach and direct usage of Algorithm~\ref{alg_3} for simulating $f(p) = \exp(-ap), \; a<1.$ This function appears in applications discussed in Section~\ref{sec:exact}. Weak simulation is considered (c.f. Remark~\ref{rem_factory} and \cite{KeaneOBrien}), i.e. all normally available random variables are obtained directly, not from $p-$coins.

First consider sampling an $f(p)-$coin by collapsing the problem to doubling (i.e. sampling of $\min\{2p, 1- 2\varepsilon\}$) using techniques of \cite{NacuPeres} Section~4 and  Algorithm~\ref{alg_4}, we refer to the original paper for a complete description of the approach. The aim is to use as few doubling steps as possible, since doubling is expensive. Let $k := \in \{1,2\}$ be s.t. $2^k > e^a.$ %We shall sample an $(1-\exp(-ap))-$coin and reverse it. 
We have \begin{eqnarray}\nonumber e^{-ap} &=& \sum_{n=0}^{\infty}\frac{a^{2n}}{(2n)!} p^{2n}-\sum_{n=0}^{\infty}\frac{a^{2n+1}}{(2n+1)!} p^{2n+1} \nonumber \\ &=& 2^k\bigg(\frac{e^a}{2^k}\Big( s_+(p) - s_-(p)\Big)\bigg),\label{eqn_ex_outer_doubling} 
\\ \textrm{where} && s_+(p) \;=\; \sum_{n=0}^{\infty}\frac{e^{-a}a^{2n}}{(2n)!} p^{2n}\qquad \textrm{and} \qquad  s_-(p) \; = \; \sum_{n=0}^{\infty}\frac{e^{-a}a^{2n+1}}{(2n+1)!}p^{2n+1}.\nonumber \end{eqnarray} 
First consider obtaining $\big( s_+(p) - s_-(p)\big)-$coins. This will be done by reversing $\big(1- s_+(p)) + s_-(p)\big)-$coins. Since 
\begin{eqnarray} \label{eqn_ex_doubling_inside} 1- s_+(p) + s_-(p) &=& 2\Big( \frac{1}{2}(1- s_+(p)) + \frac{1}{2}s_-(p)\Big),\end{eqnarray}
one feeds the doubling algorithm with $\big( \frac{1}{2}(1- s_+(p)) + \frac{1}{2}s_-(p)\big)-$coins obtained by tossing a fair coin first and using an $(1- s_+(p))-$coin or an $s_-(p)-$coin in the second step, depending on the outcome of the fair coin.  

We now describe sampling an $s_+(p)-$coin $C_{s_+(p)}.$ An
$s_-(p)-$coin can be obtained in a similar manner. Due to the specific
form of series expansion using a Poisson mixture is more efficient
then using the enforced geometric mixture suggested in the proof of
Proposition~16 \cite{NacuPeres}, details are below. Sample $N \sim
\textrm{Poiss}(a).$ If $N$ is even then generate iid $p-$coins $X_1,
\dots, X_N$ and declare $C_{s_+(p)}:= 1$ if $X_1 = \dots = X_N = 1.$
Otherwise, i.e. if $N$ is odd or $N$ is even and $\exists_{1\leq k\leq
  N}$ s.t. $X_k = 0,$ declare $C_{s_+(p)}:= 0.$ Finally, $\frac{e^a}{2^{k}}-$thinning should be applied to the $\big( s_+(p) - s_-(p)\big)-$coins and the doubling algorithm should be nested $k$ times on  $\frac{e^a}{2^{k}}\big( s_+(p) - s_-(p)\big)-$coins.

To approximate the total simulation effort let $K_p$ be the cost of obtaining the $p-$coin and assume $U(0,1)$ r.v's cost to be $O(1).$ Then the cost of a fair coin and the Poisson r.v. is also $O(1)$ (c.f. \cite{Devroye}). We assume $K_p \gg 1.$ Since $s_+(p) - s_-(p) = e^{-a(p+1)} \geq e^{-2a},$ 
\begin{eqnarray}\nonumber 
\frac{1}{2}(1- s_+(p)) + \frac{1}{2}s_-(p) &\leq & \frac{1}{2} - \frac{1}{2}e^{-2a},
\end{eqnarray}
and in (\ref{eqn_ex_doubling_inside}) one can take $\varepsilon_1 = e^{-2a}/5$ for the doubling algorithm for $\min\{2p, 1-2\varepsilon_1\}.$ Moreover to apply $k$ times the doubling scheme in (\ref{eqn_ex_outer_doubling}) we have to ensure that $\delta < e^{-ap} < 1-\delta$ (c.f. Proposition 17 of \cite{NacuPeres}) and therefore we have to restrict our considerations to the situation where a lower bound on $p,$ say $p_l$ is known. This implies that $\frac{e^a}{2}\big( s_+(p) - s_-(p)\big) = e^{-ap}/2 \leq e^{-ap_l}/2 = \frac{1}{2} - (\frac{1}{2}-e^{-ap_l}/2)$ and in the last iteration of the
 doubling scheme in (\ref{eqn_ex_outer_doubling}) one can take say $\varepsilon_2 =(\frac{1}{2}-e^{-ap_l}/2)/3.$ This yields a lower bound on the number of $p-$coins required before the algorithm can stop (c.f. the discussion in Section~\ref{sec_NacuPeres}), namely $n_0(\varepsilon_1)n_0(\varepsilon_2).$ If $k=2,$ the bound must be multiplied by $n_0(\varepsilon)$ with $\varepsilon$ used in the first iteration of the doubling scheme in (\ref{eqn_ex_outer_doubling}). Recall that e.g. $n_0(\varepsilon) = 32768$ for $\varepsilon = 0.1$ and it increases as $\varepsilon$ decreases. Therefore, when applying Algorithm~4 based on the Nacu-Peres approach, a conservative lower bound on the total simulation effort is $2^{30}K_p.$

On the other hand, for the exponential function we readily have an
alternating series expansion and can apply  Algorithm~3
directly by appealing to Proposition~\ref{prop_alternating}. Then,  we have \begin{eqnarray} \mathbb{P}(N \geq n) = \frac{(ap)^{n}}{n!},\nonumber \end{eqnarray}
where $N$ is the number of $p-$coins required. This implies $\mathbb{E}N \leq e,$ and the expected simulation effort is bounded from above by $eK_p$ and the bound holds uniformly for $a \in [0,1]$ and $p \in [0,1].$

\section{Application to the exact simulation of diffusions}
\label{sec:exact}

In this Section we derive the Exact Algorithm for diffusions
introduced by \cite{BeskosRobertsEA1} as a specific application of  the
Bernoulli factory for 
alternating series of Section \ref{sec:bern-alt}. 
 We are interested in
simulating $X_T$ which is the solution at time $T>0$ of the following
Stochastic Differential Equation (SDE):
\begin{equation}
\label{eq:SDE}
\mathrm{d}X_t = \alpha(X_t)\, 
\mathrm{d}t + \mathrm{d}W_t, 
\quad X_0 = x\in\mathbf{R}, \, t \in [0,T]
\end{equation}
driven by the Brownian motion 
$\{W_t\,;\,0\le t \le T\}$,
where the drift function $\alpha$ is assumed to 
satisfy the regularity conditions that 
guarantee the existence of a
weakly unique, global solution 
of (\ref{eq:SDE}), 
see ch.4 of \cite{kloeplat1995}.  Let $\Omega \equiv C([0,T],\mathbf{R})$ be the set
of continuous mappings from $[0,T]$ to $\mathbf{R}$ and $\omega$
be a typical element of $\Omega$. Consider the co-ordinate mappings
$B_t:\Omega\mapsto\mathbf{R}$, $t \in [0,T]$, such that for any $t$, $B_t
(\omega) = \omega(t)$ and the cylinder $\sigma$-algebra $\mathcal{C}
=\sigma (\{ B_t\,;\,0\leq t \leq T \})$. We denote by $W^x  = \{
W^x_t\,;\,0 \leq t \leq T\}$ the  Brownian motion started at $x \in
\mathbf{R}$, and by $W^{x,u}  = \{
W^{x,u}_t\,;\,0 \leq t \leq T\}$ the Brownian motion started at $x$
and finishing at $u \in
\mathbf{R}$ at time $T$; the latter is known as the Brownian bridge. 
We make the following assumptions for $\alpha$: 
\begin{enumerate}
\item The drift function $\alpha$ is 
   differentiable. 
\item The function
$h(u) = \exp \{ A(u) - (u-x)^2 / 2T \}$, 
$u\in\mathbf{R}$, for 
$A(u) = \int_
{\scriptscriptstyle{0}}^{\scriptscriptstyle{u}}
\alpha(y)\mathrm{d}y$,
is
integrable.
\item The function 
$(\alpha^2+\alpha^{'})/2$ is bounded below by $\ell>-\infty$, and
above by $r+\ell<\infty$.     
\end{enumerate}
Then, let us define 
\begin{equation}
  \label{eq:phi}
  \phi(u) = \frac{1}{r}[(\alpha^2+\alpha^{'})/2-\ell] \, \in [0,1]\,,
\end{equation}
$\mathbb{Q}$ be the probability measure induced by the solution $X$ of
(\ref{eq:SDE}) on $(\Omega,\mathcal{C})$, $\mathbb{W}$ the corresponding probability measure
for $W^x$, and $\mathbb{Z}$ be the probability 
measure defined as the following simple change of measure from $\mathbb{W}$:
$\mathrm{d}\mathbb{W}/ \mathrm{d}\mathbb{Z}( \omega) \propto \exp\{ -
A(B_T) \}$. Note that a stochastic process distributed according to
$\mathbb{Z}$ has similar dynamics to the Brownian motion,
with the exception of the distribution of the marginal distribution at
time $T$ which is biased according to $A$. Hence, we refer to 
this process as the biased Brownian motion. 
In particular,  the biased Brownian motion conditional on its value at
time $T$ has the same law as the corresponding Brownian bridge. 

The final steps of the mathematical developement entail resorting to
the Girsanov 
transformation of measures 
(see for instance ch.8 of \cite{okse:1998}) to obtain
$\mathrm{d}\mathbb{Q}/\mathrm{d}\mathbb{W}$;  applying an
integration-by-parts (possible by means of Assumption 1) to eliminate
the stochastic integral involved in 
the Radon-Nikodym derivative; and using the definition of $\mathbb{Z}$
to obtain that
\begin{equation}
\label{eq:final}
\frac { \mathrm{d} \mathbb{Q} } 
{ \mathrm{d} \mathbb{Z} } \, ( \omega )
\propto 
\exp\left\{ -rT \int_{0}^{T}T^{-1}
\phi
(B_{t})\mathrm{d}t\right \} \,\leq 1\,\qquad
  \mathbb{Z}-\textrm{a.s.}
\end{equation}
The details of this argument can be found in
\cite{BeskosRobertsEA1,BeskosPapaRobertsEA3}. 
By a standard rejection sampling principle, it follows that a path
$\omega$ generated according to $\mathbb{Z}$ and accepted with
probability \eqref{eq:final}, yields a draw from
$\mathbb{Q}$. 
Hence, the following algorithm yields  an exact sample
from the solution of 
\eqref{eq:SDE} at time $T$: \\

%
%\begin{enumerate} \label{ea} ~\\
\noindent 1. simulate  $u \sim h$ \\
2. generate a $C_s$ coin where $s:=e^{-rT J}$, and $J:=\int_{0}^{T}T^{-1}
\phi
(W^{x,u}_{t})\mathrm{d}t$; \\ 
3. If $C_s=1$ output $u$ and STOP; \\
4. If $C_s=0$ GOTO 1. \\
%\end{enumerate}
%

\noindent Exploiting the Markov property, we can
assume from now on that $rT<1$. 
If  $T$ is such that
$rT>1$, then we can devise sub-intervals of length $\delta$ such that
$r\delta<1$ and apply the algorithm sequentially.

Clearly, the challenging part of the algorithm is Step 2,
since exact computation of $J$ is impossible due to the integration
over a Brownian bridge path.  On
the other hand, it is easy to generate $J$-coins: $C_J=\mathbb{I}(\psi
< \phi(W^{x,u}_\chi))$, where $\psi \sim U(0,1)$ and $\chi \sim
U(0,T)$ independent of the Brownian bridge $W^{x,u}$ and of
each other. Therefore, we deal with another instance of the problem
studied in this article: given $p$-coins how to generate $f(p)$-coins,
where here $f$ is the exponential function. This is precisely the
context of Section \ref{sec:bern-alt}, where the use of
Algorithm~\ref{alg_3} was advocated for efficient simulation.  As a
final remark, we note that exact simulation algorithms 
have been proposed in \cite{BeskosPapaRobertsEA2,BeskosPapaRobertsEA3}
for multivariate diffusions and  unbounded drit functionals. These
extensions involve decompositions of the  Brownian motion and  auxiliary Poisson
processes.

\section*{Acknowledgements} We would like to thank an anonymous referee for insightful suggestions that improved the presentation of the paper and  \c Serban Nacu and Yuval Peres for helpful comments. The first author is grateful to Wojciech Niemiro for a helpful discussion. 
The third author would like to acknowledge financial support by the Spanish government
through a ``Ramon y Cajal'' fellowship and the grant  MTM2008-06660
and 
the Berlin Mathematical School for hosting him as a visiting
Professor while preparing this manuscript. 
\end{document}